\documentclass[sn-mathphys,Numbered]{style}


\usepackage{graphicx}%
\usepackage{multirow}%
\usepackage{amsmath,amssymb,amsfonts}%
\usepackage{amsthm}%
\usepackage{mathrsfs}%
\usepackage[title]{appendix}%
\usepackage{xcolor}%
\usepackage{textcomp}%
\usepackage{manyfoot}%
\usepackage{booktabs}%
\usepackage{algorithm}%
\usepackage{algorithmicx}%
\usepackage{algpseudocode}%
\usepackage{listings}%
\usepackage{cleveref}
\DeclareMathAlphabet{\pazocal}{OMS}{zplm}{m}{n}

 



\newtheorem{theorem}{Theorem}
%

%
\newtheorem{remark}{Remark}%

\raggedbottom

\begin{document}

\title[Free-Surface Equatorial Flows with Surface Tension in Spherical Coordinates]{Free-Surface Equatorial Flows with Surface Tension in Spherical Coordinates}


\author[1,2]{\fnm{Andrei} \sur{Stan}}\email{andrei.stan@ubbcluj.ro}

\affil[1]{ \orgname{Department of Mathematics, Babeș-Bolyai
University}, \orgaddress{ \city{Cluj-Napoca}, \postcode{400084}, \country{Romania}}}

\affil[2]{ \orgname{Tiberiu Popoviciu
Institute of Numerical Analysis, Romanian Academy}, \orgaddress{ \city{Cluj-Napoca}, \postcode{400110}, \country{Romania}}}


\abstract{

In this paper, we determine an exact solution to the governing equations in spherical coordinates for an inviscid, incompressible fluid. This solution describes a steady, purely azimuthal equatorial flow with an associated free surface. Using functional analytic techniques, we demonstrate that if a free surface is known beforehand, the variations in pressure needed to achieve this surface implicitly define the shape of the free surface  in a unique way.
}

\keywords{ Azimuthal flows, Surface tension, General density, Implicit function theorem, Coriolis
force}
\pacs[AMS Subject Classification]{35Q31, 35Q35, 35Q86, 35R35, 26B10}


\maketitle

\section{Introduction}

This paper focuses on developing an exact solution for the governing
equations of geophysical fluid dynamics (GFD) governing inviscid,
incompressible, and stratified fluid dynamics in the equatorial region.
Specifically, we investigate a steady purely azimuthal flow. Our approach
ensures accuracy by consistently employing spherical coordinates, thereby
avoiding any simplifications to the geometry within the governing equations.
Additionally, within these equations, we incorporate the complexities of
Coriolis and centripetal forces, along with the surface tension resulting in highly nonlinear dynamics \cite%
{const2016a, cushman2011introduction, vallis2006atmospheric}.

This study builds upon recent significant advancements in the field. In \cite%
{const2016a, const2016b}, the authors shown for the first time that exact
solutions to the complete governing equations of geophysical fluid dynamics
(GFD) can be formulated using spherical coordinates, representing purely
azimuthal, depth-varying flows. These solutions are capable of modeling both
the equatorial undercurrent (EUC) and the Antarctic Circumpolar Current
(ACC). Initially, these solutions describe purely homogeneous fluids without
stratification. Subsequently, in works by the authors \cite{henry2018a,
henry2018b}, exact equatorial flow solutions allowing for stratification
were developed. These solutions, while relatively simple, feature fluid
density varying linearly with depth and remaining independent of latitude.
Finally, in \cite{M}, the authors established the existence of solutions in
the context of general stratification, a paper that the present work aims to
build upon and enhance.

In previous works (see, e.g., \cite{as, M, petrusel}), the existence and uniqueness of the free surface are guaranteed only for pressure variations close to the pressure required to achieve a flat, undisturbed surface. The novelty of this paper is that if a free surface $\gamma$ is known beforehand, a similar conclusion holds: sufficiently small variations in pressure from the value required to obtain $\gamma$ lead to a unique free surface. From a physical point of view, such a result is expected, as small variations in pressure that determine a smooth enough free surface should also result in a similarly smooth free surface.
\section{Preliminaries}


Throughout this paper we employ spherical coordinates $(r, \theta, \phi)$,
where $r$ represents distance to the sphere's center, $\theta$ ranges from 0
to $\pi$ (with $\frac{\pi}{2} - \theta$ denoting latitude), and $\phi$ spans
from 0 to $2\pi$ (longitude). Notably, the North and South poles correspond
to $\theta = 0$ and $\pi$, respectively, while the Equator lies at $\theta = 
\frac{\pi}{2}$. Unit vectors in this system, denoted as $\mathbf{e}_r$, $%
\mathbf{e}_{\theta}$, and $\mathbf{e}_{\phi}$, provide directional
references, with $\mathbf{e}_{\phi}$ from West to East and $\mathbf{e}%
_{\theta}$ from North to South.

The equations describing the motion of an inviscid and incompressible flow
are comprised in the Euler equations and, respectively, mass conservation
equation. In spherical coordinates, they are given by 
\begin{align}  \label{euler eq}
\begin{aligned} u_t + uu_r + \frac{v}{r}u_{\theta} +
\frac{w}{r\sin\theta}u_{\phi} - \frac{1}{r}(v^2 + w^2) &= -\frac{1}{\rho}
p_r + F_r \\ v_t + uv_r+ \frac{v}{r}v_{\theta} +
\frac{w}{r\sin\theta}v_{\phi} + \frac{1}{r}(uv - w^2\cos\theta) &=
-\frac{1}{\rho} \frac{1}{r}p_\theta + F_{\theta} \\ w_t + uw_r +
\frac{v}{r}w_{\theta} + \frac{w}{r\sin\theta}w_{\phi} + \frac{1}{r}(uw +
vw\cot\theta) &= -\frac{1}{\rho} \frac{1}{r\sin\theta}p_\phi + F_{\phi},
\end{aligned}
\end{align}
and 
\begin{equation}  \label{mass conservation}
\frac{1}{r^2}\frac{\partial}{\partial r}(\rho r^2u) + \frac{1}{r\sin\theta}%
\frac{\partial}{\partial\theta}(\rho v \sin \theta) + \frac{1}{r\sin\theta}%
\frac{\partial(\rho w)}{\partial\phi} = 0.
\end{equation}
Here, $\mathbf{u} = u\mathbf{e}_r + v\mathbf{e}_{\theta} + w\mathbf{e}_{\phi}
$ is the velocity field, $p(r,\theta,\phi)$ represents the pressure, $(F_r,
F_\theta, F_\phi)$ the body-force vector, while $\rho=\rho(r,\theta)$ stands
for the density distribution.

To ensure an accurate analysis of fluid dynamics in specific locations, it
is crucial to consider the effects of Earth's rotation. Additional terms
must be integrated into the Euler equations to account for these effects,
namely the Coriolis force $2\boldsymbol{\Omega} \times \mathbf{u}$ and the
centripetal acceleration $\boldsymbol{\Omega} \times (\boldsymbol{\Omega}
\times \mathbf{r})$, where 
\begin{equation*}
\boldsymbol{\Omega} = \Omega(\mathbf{e}_r \cos\theta - \mathbf{e}_{\theta}
\sin\theta), 
\end{equation*}

Here, $\Omega \approx 7.29 \times 10^{-5}$ rad/s represents the Earth's
constant rate of rotation and $\mathbf{r}=r \mathbf{e}_r$. These forces are
incorporated into the Euler equations. Consequently, with gravity given by
the vector $(-g, 0, 0)$, and the latter two quantities combined, 
\begin{align*}
2\boldsymbol{\Omega} \times \mathbf{u}+\boldsymbol{\Omega} \times (%
\boldsymbol{\Omega} \times \mathbf{r})&=2\Omega \left( -w\sin\theta\mathbf{e}%
_r - w\cos\theta\mathbf{e}_{\theta} + (u\sin\theta + v\cos\theta)\mathbf{e}%
_{\phi}\right) \\
& -r\Omega^2 \left( \sin^2\theta \mathbf{e}_r+\sin \theta \cos \theta 
\mathbf{e}_\phi \right),
\end{align*}
the Euler's equations become 
\begin{align}  \label{euler eq2}
\begin{aligned} uu_r + \frac{v}{r}u_{\theta} + \frac{w}{r\sin\theta}u_{\phi}
- \frac{1}{r}(v^2 + w^2) -2\Omega w \sin\theta - r\Omega^2 \sin^2\theta &=
-\frac{1}{\rho} p_r -g \\ uv_r+ \frac{v}{r}v_{\theta} +
\frac{w}{r\sin\theta}v_{\phi} + \frac{1}{r}(uv - w^2\cos\theta)-2\Omega w
\cos\theta - r\Omega^2 \sin\theta \cos\theta &= -\frac{1}{\rho}
\frac{1}{r}p_\theta \\ uw_r + \frac{v}{r}w_{\theta} +
\frac{w}{r\sin\theta}w_{\phi} + \frac{1}{r}(uw + vw\cot\theta) +2\Omega
\left(u \sin\theta + v \cos\theta \right)&= - \frac{1}{\rho
r\sin\theta}p_\phi, \end{aligned}
\end{align}
where we considered $u_t=v_t=w_t=0$. In addition to the mass conservation
and Euler equations, the movement of water is influenced by boundary
conditions. As the fluid extends infinitely in all horizontal directions, it
encounters two boundaries: the rigid flat bed and the water's free surface.
On the bed where $r = d(\theta, \phi)$, we apply the kinematic boundary
condition, 
\begin{equation*}
u = \frac{v}{r} d_\theta + \frac{w}{r \sin \theta} d_\phi, 
\end{equation*}
while on the free surface where $r = R + h(\theta, z)$, we impose the
kinematic boundary condition 
\begin{equation*}
u = \frac{v}{r} h_\theta + \frac{w}{r \sin \theta} h_\phi. 
\end{equation*}
Additionally, on the free surface, we have the dynamic condition 
\begin{equation}  \label{dinamic condition 1}
p = P(\theta, \phi) + \sigma \nabla \cdot \vec{n}.
\end{equation}%
Here, $\sigma$ represents the coefficient of surface tension, and $\vec{n}$
is the outward-pointing unit normal vector. The function $h$ is currently
unknown and will be determined later.

Next, we recall two well known results from literature used in this paper.
We start with the Implicit Function Theorem, which ensures the existence of
nontrivial zeros of a $C^1$ mapping between Banach spaces (see, e.g., \cite%
{functionala}).

\begin{theorem}
\label{teorema functiei implicie}  Let $X,Y,Z$ be Banach spaces, $U\subset
X\times Y$ an open neighbourhood of a point $(x_0,y_0)\in X\times Y$ and let 
$f\colon U \to Z$ be a continuous functions. Assume that:

\begin{itemize}
\item[i) ] The function $f$ satisfies $f(x_0,y_0)=0$. 

\item[ii)] The partial derivative $f_y(x_0,y_0)$ exists and is an linear
homeomorphism from $Y$ to $Z$. 
\end{itemize}

Then, there exists an open neighbourhood $U_1$ of $x_0$ and a unique $%
g\colon U_1 \to Y$ continuous function such that $g(x_0)=y_0$ and $%
f(x,g(x))=0$ on $U_1$.
\end{theorem}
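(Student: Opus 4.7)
The theorem stated is the Hildebrandt--Graves version of the implicit function theorem, and the natural approach is to convert the equation $f(x,y)=0$ into a fixed-point problem for $y$ and apply the Banach contraction principle with $x$ as a parameter. The plan is to set $A:=f_y(x_0,y_0)$, which by hypothesis (ii) is a linear homeomorphism of $Y$ onto $Z$, and to define
\begin{equation*}
T(x,y):=y-A^{-1}f(x,y),
\end{equation*}
so that $T(x,y)=y$ is equivalent to $f(x,y)=0$. The construction of $g$ then reduces to showing that, for $x$ in a small neighbourhood $U_1$ of $x_0$, the map $T(x,\cdot)$ has a unique fixed point in a small closed ball $\overline{B}(y_0,r)\subset Y$, and that this fixed point depends continuously on $x$.

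The first step is to choose $r>0$ so small that $\overline{B}(y_0,r)\subset U$ (after possibly shrinking the $X$-component as well) and so that the differentiability of $f$ with respect to $y$ at $(x_0,y_0)$ gives the estimate
\begin{equation*}
\|f(x_0,y)-f(x_0,y_0)-A(y-y_0)\|\le \tfrac{1}{2\|A^{-1}\|}\,\|y-y_0\|
\quad\text{for all }y\in\overline{B}(y_0,r).
\end{equation*}
This, combined with $f(x_0,y_0)=0$, shows $\|T(x_0,y)-y_0\|\le r/2$ for such $y$. The second step is to use the continuity of $f$ at $(x_0,y_0)$ to choose a neighbourhood $U_1$ of $x_0$ so that $\|A^{-1}(f(x,y)-f(x_0,y))\|\le r/2$ uniformly for $y\in\overline{B}(y_0,r)$; then $T(x,\cdot)$ maps $\overline{B}(y_0,r)$ into itself. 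The third step is the contraction property, which for two points $y_1,y_2\in\overline{B}(y_0,r)$ follows from writing
\begin{equation*}
T(x,y_1)-T(x,y_2)=(y_1-y_2)-A^{-1}\bigl(f(x,y_1)-f(x,y_2)\bigr)
\end{equation*}
and invoking the differentiability estimate again to bound the right-hand side by $\tfrac{1}{2}\|y_1-y_2\|$. The Banach fixed point theorem then delivers, for each $x\in U_1$, a unique $g(x)\in\overline{B}(y_0,r)$ with $T(x,g(x))=g(x)$, i.e.\ $f(x,g(x))=0$, and $g(x_0)=y_0$.

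Continuity of $g$ is the final step: for $x,x'\in U_1$ one uses the fixed-point identity to write
\begin{equation*}
g(x)-g(x')=T(x,g(x))-T(x',g(x'))
\end{equation*}
and splits the right-hand side into a contraction part (of size $\le \tfrac{1}{2}\|g(x)-g(x')\|$) and a part controlled by $\|A^{-1}\|\cdot\|f(x,g(x'))-f(x',g(x'))\|$, which is small by continuity of $f$ in the $x$-variable. Absorbing the contraction part on the left-hand side yields continuity of $g$ at every point of $U_1$, and in particular at $x_0$.

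The main obstacle is the third step, namely verifying the contraction property only from the hypothesis that $f_y$ exists \emph{at the single point} $(x_0,y_0)$, rather than in a neighbourhood. The trick is not to differentiate $f$ away from $(x_0,y_0)$ but to rewrite the difference $f(x,y_1)-f(x,y_2)$ as $[f(x,y_1)-f(x_0,y_1)]-[f(x,y_2)-f(x_0,y_2)]+[f(x_0,y_1)-f(x_0,y_2)]$ and use, respectively, the joint continuity of $f$ (to make the first bracketed differences arbitrarily small by shrinking $U_1$) and the pointwise differentiability estimate at $(x_0,y_0)$ to handle the last term. Once this uniform contraction bound is obtained, the remainder of the argument is a routine application of Banach's theorem with a parameter.
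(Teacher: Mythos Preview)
The paper does not prove this theorem: it is recalled as a standard preliminary result with a reference to Berger's \emph{Nonlinearity and Functional Analysis}, so there is no proof in the paper to compare your attempt against.

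On the merits of your sketch, the gap is exactly where you locate the ``main obstacle'', and the proposed trick does not close it. Fr\'echet differentiability of $f(x_0,\cdot)$ at the single point $y_0$ gives $f(x_0,y)=A(y-y_0)+R(y)$ with $\|R(y)\|=o(\|y-y_0\|)$, whence for $y_1,y_2\in\overline{B}(y_0,r)$ one only obtains
\[
\bigl\|f(x_0,y_1)-f(x_0,y_2)-A(y_1-y_2)\bigr\|=\|R(y_1)-R(y_2)\|\le \|R(y_1)\|+\|R(y_2)\|,
\]
which is a bound of order $o(r)$, not of order $\|y_1-y_2\|$. Likewise, joint continuity of $f$ makes each bracket $[f(x,y_i)-f(x_0,y_i)]$ small in absolute value as $x\to x_0$, but again not relative to $\|y_1-y_2\|$. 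Your decomposition therefore never yields an inequality $\|T(x,y_1)-T(x,y_2)\|\le c\,\|y_1-y_2\|$ with $c<1$. Concretely, in dimension one, $f(x,y)=y+|y|^{3/2}\sin(1/y)-x$ (with the convention $f(x,0)=-x$) is continuous and satisfies $f_y(0,0)=1$, yet $y\mapsto f(x,y)$ is not Lipschitz on any neighbourhood of $0$, so $T(x,\cdot)$ is not a contraction there. The standard remedy is to strengthen the hypothesis to $f\in C^1$, or at least to assume that $f_y$ exists in a neighbourhood and is continuous at $(x_0,y_0)$ (the genuine Hildebrandt--Graves assumption); this is harmless for the paper's purposes, since the map $\mathbb{F}$ to which the theorem is applied is explicitly declared continuously differentiable.
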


The second result concerns the existence and uniqueness of solution for a
second order differential equation (see, e.g., \cite{logan, Coddington}).

\begin{theorem}
\label{th initial value pb}  The initial value problem 
\begin{equation}  \label{ec dif generala}
y^{\prime \prime }+ p(t)y^{\prime }+ q(t)y = g(t), \quad y(t_0) = y^{\prime
}(t_0) = 0,
\end{equation}
where $p, q, $ and $g $ are continuous functions on an open interval $I $
that contains the point $t_0 $, has a unique solution. In addition, the
solution $y(t)$ is given by  
\begin{equation*}
y(t)= y_p+\int_{t_0}^{t}\frac{ g(s) \left(\varPhi_1(t) W_1(s) +\varPhi_2(t)
W_2(s)\right)}{W(\varPhi_1, \varPhi_2)(s)} ds,
\end{equation*}
where $y_p$ is a particular solution of \eqref{ec dif generala}, $\{\varPhi%
_1, \varPhi_2\}$ is a basis for the solutions of the homogeneous equation $%
y^{\prime \prime }+ p(t)y^{\prime }+ q(t)y =0$, $W$ is the Wronskian of the
basis $\{\varPhi_1, \varPhi_2\}$ and $W_i$ (i=1,2) is the Wronskian obtained
by replacing the $i-$th column of $W$ with the column vector $(0,1)$.

\end{theorem}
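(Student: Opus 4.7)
The plan is to split the proof into two parts: first establish existence and uniqueness, then derive the explicit integral representation by variation of parameters.

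For existence and uniqueness, I would reduce \eqref{ec dif generala} to a first-order linear system by setting $Y(t) = (y(t), y'(t))^\top$, which transforms the problem into
\begin{equation*}
Y'(t) = A(t) Y(t) + G(t), \qquad Y(t_0) = (0, 0)^\top,
\end{equation*}
with $A(t) = \begin{pmatrix} 0 & 1 \\ -q(t) & -p(t) \end{pmatrix}$ and $G(t) = (0, g(t))^\top$ continuous on $I$. Applying Picard--Lindel\"of on an arbitrary compact subinterval of $I$, together with the linear growth of the right-hand side that rules out blow-up in finite time, yields a unique global solution defined on all of $I$.

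For the representation, fix a fundamental system $\{\varPhi_1, \varPhi_2\}$ of solutions of the associated homogeneous equation and a particular solution $y_p$ of \eqref{ec dif generala}. I would look for the solution of the IVP in the form $y = y_p + u_1 \varPhi_1 + u_2 \varPhi_2$ under the classical variation-of-parameters constraint $u_1' \varPhi_1 + u_2' \varPhi_2 = 0$. Plugging this ansatz into \eqref{ec dif generala} reduces the matter to the algebraic system
\begin{equation*}
\begin{pmatrix} \varPhi_1 & \varPhi_2 \\ \varPhi_1' & \varPhi_2' \end{pmatrix} \begin{pmatrix} u_1' \\ u_2' \end{pmatrix} = \begin{pmatrix} 0 \\ g \end{pmatrix},
\end{equation*}
whose coefficient determinant is the Wronskian $W(\varPhi_1, \varPhi_2)$, which is nowhere zero on $I$ by Abel's identity. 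Cramer's rule then produces $u_i' = g W_i / W$ with $W_i$ exactly as described in the statement, and integrating from $t_0$ yields the integral appearing in the formula after factoring $\varPhi_1(t), \varPhi_2(t)$ out of the sum.

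The main technical point is the choice of integration constants so that $y(t_0) = y'(t_0) = 0$. Integrating $u_i'$ from $t_0$ to $t$ produces antiderivatives that vanish at $t_0$; the remaining adjustment is to absorb the values $y_p(t_0)$ and $y_p'(t_0)$ by a homogeneous correction expressed in the $\{\varPhi_1, \varPhi_2\}$-basis, which is possible because $\{\varPhi_1(t_0), \varPhi_2(t_0)\}$ together with their derivatives form an invertible matrix (the Wronskian being nonzero). Since the first step already guarantees uniqueness, verifying that the resulting expression both solves \eqref{ec dif generala} and meets the zero initial data automatically identifies it with the unique solution, closing the proof.
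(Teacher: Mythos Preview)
The paper does not supply a proof of this theorem: it is stated as a preliminary result with references to standard ODE textbooks (Logan, Coddington), so there is no ``paper's own proof'' to compare against. Your argument is the standard one and is correct in outline: reduction to a first-order linear system plus Picard--Lindel\"of (with linearity preventing finite-time blow-up) handles existence and uniqueness on all of $I$, and variation of parameters with Cramer's rule yields the integral representation.

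One small remark on the representation: the formula as stated in the theorem is somewhat loosely worded (it superposes an arbitrary particular solution $y_p$ with the variation-of-parameters integral, without displaying the homogeneous correction needed to hit the zero initial data). You noticed this and correctly pointed out that the Wronskian being nonzero at $t_0$ lets you solve for the coefficients of that correction in the $\{\varPhi_1,\varPhi_2\}$-basis. That observation is exactly what is needed to reconcile the displayed formula with the initial conditions, so your treatment is in fact more careful than the statement itself.
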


\section{Main result}

In the sequel, we assume an azimuthal flow, i.e., the velocity field $%
\mathbf{u} = u\mathbf{e}_r + v\mathbf{e}_{\theta} + w\mathbf{e}_{\phi}$
satisfies $u=v=0$ and $w=w(r, \theta)$. Also, we assume the free surface satisfy $h=h(\theta)$  while the flat bed  is a fixed real value $d$. Consequently, the mass conservation equation \eqref{mass
conservation} is automatically satisfied, while the Euler's equations %
\eqref{euler eq2} takes the form 
\begin{equation}  \label{ec euler3}
\begin{cases}
-\frac{w^2}{r} - 2\Omega w \sin\theta - r\Omega^2 \sin^2\theta = -\frac{1}{%
\rho}p_r - g, \\ 
-\frac{w^2}{r} \cot\theta - 2\Omega w \cos\theta - r\Omega^2 \sin\theta
\cos\theta = -\frac{1}{\rho r} p_\theta, \\ 
0 = -\frac{1}{\rho} \frac{1}{r \sin\theta} p_\phi,%
\end{cases}%
\end{equation}
or equivalently 
\begin{equation}
\begin{cases}
\frac{\rho(r,\theta)}{r}\left( w+\Omega r \sin \theta\right)^2=p_r+g \rho(r,
\theta) \\ 
\rho(r, \theta) \cot \theta \left( w+\Omega r \sin \theta\right)^2=p_\theta
\\ 
0=p_\phi.%
\end{cases}%
\end{equation}
From the last relation, we obtain the independence of the pressure with
respect to $\phi$, i.e., $p=p(r, \theta)$. If we denote 
\begin{equation*}
Z = Z(r,\theta) := \frac{1}{r}\left( w + \Omega r \sin \theta \right)^2,
\end{equation*}
and differentiate the first equation with respect to $\theta$ and the second
one with respect to $r$, we deduce 
\begin{equation*}
(\rho Z)_{\theta} - (\rho r \cot \theta Z)_r = g \rho \theta,
\end{equation*}
which yields 
\begin{equation}  \label{ec U}
-(r \cos \theta) U_r + (\sin \theta) U_{\theta} = (r \sin \theta) g
\rho_\theta (r,\theta),
\end{equation}
where $U(r, \theta)=\rho(r, \theta)\left( w + \Omega r \sin \theta \right)^2.
$

Employing the method of characteristics and following steps similar to those in 
\cite{M,petrusel}, we infer  
\begin{equation}
w(r,\theta) = -\Omega r \sin \theta + \frac{1}{\sqrt{\rho(r,\theta)}}\left(
F(r \sin \theta) +gr \sin \theta \int_0^{f(\theta)} \rho_\theta ( \bar{r}%
(s), \bar{\theta}(s))ds \right)^{\frac{1}{2}},  \label{w}
\end{equation}
where 
\begin{equation*}
\bar{r}(s)=\frac{r\sin \theta}{2}(e^s+e^{-s}),\, \bar{\theta}(s)=\arccos\left(%
\frac{1-e^{2s}}{1+e^{2s}} \right), \, f(\theta)=\frac{1}{2}\ln \frac{1-\cos
\theta}{1+\cos \theta},
\end{equation*}
and $F(\xi):=U\left(\xi, \frac{\pi}{2}\right)$ is an arbitrary smooth
function. Note that \begin{equation*}
    \bar{\theta}(f(\theta))=\theta \text{ and }\bar{r}(f(\theta))=r.
\end{equation*} To proceed with the determination of the pressure, one sees that 
\begin{align}  \label{p_r}
p_r+g\rho(r, \theta)&=\frac{\rho(r, \theta)}{r}\left( w+\Omega r \sin
\theta\right)^2 \\
&=\frac{1}{r}\left( F(r \sin \theta) +gr \sin \theta \int_0^{f(\theta)}
\rho_\theta ( \bar{r}(s), \bar{\theta}(s))ds \right),  \notag
\end{align}
and 
\begin{equation}  \label{p_tetha}
p_\theta=\cot \theta U(r\theta)=\cot \theta \left( F(r \sin \theta) +gr \sin
\theta \int_0^{f(\theta)} \rho_\theta ( \bar{r}(s), \bar{\theta}(s))ds
\right).
\end{equation}
After integration from $a$ (an arbitrary constant) to $r$ in \eqref{p_r}, we
infer 
\begin{equation}  \label{p1}
p(r,\theta) = C(\theta) - g \int_{r}^{a} \rho(s,\theta) \, ds + \int_{a \sin
\theta}^{r \sin \theta} \frac{F(s)}{s} \, ds + \pazocal{F}(s,\theta) \, dy,
\end{equation}
where $C(\theta):=p(a,\theta)$ is an arbitrary smooth function and 
\begin{equation*}
\pazocal{F}(\xi,\theta)= \int_{0}^{f(\theta)} g \rho_{\theta} \left( \xi\, 
\frac{e^s + e^{-s}}{2}, \bar{\theta}(s) \right) \, ds.
\end{equation*}
To determine the function $C$, we differentiate with respect to $\theta$ in %
\eqref{p1}, which gives 
\begin{align*}
p_\theta&=C^{\prime }(\theta)-g\int_0^r \rho_\theta(s,\theta)ds+\cot \theta
\left( F(r \sin \theta)-F(a\sin \theta) \right) \\
& \quad +\pazocal{F} (r\sin \theta) r \cos \theta-\pazocal{F}(a \sin \theta)
a \cos \theta+\int_{a\sin \theta}^{r \sin \theta}\pazocal{F}_\theta(s,
\theta)ds.
\end{align*}
Since 
\begin{align*}
\pazocal{F}_\theta(s, \theta)&=f'(\theta) g \rho_\theta\left( 
\frac{\bar{s}\left( f(\theta)\right)}{\sin \theta}, \bar{\theta}(f(\theta))
\right)\\&
=\frac{g}{\sin \theta} \rho_\theta \left( \frac{%
s}{\sin \theta},\theta \right),
\end{align*}
we obtain 
\begin{equation*}
\int_{a\sin \theta}^{r \sin \theta}\pazocal{F}_\theta(s, \theta)ds=\int_a^r
g \rho_\theta (s, \theta) ds.
\end{equation*}
Hence, the expression of $p_\theta$ becomes 
\begin{equation}  \label{p_theta2}
p_{\theta} = C^{\prime }(\theta) + \cot \theta [F(r \sin \theta) - F(a \sin
\theta)] + \pazocal{F}(r \sin \theta) r \cos \theta-\pazocal{F}(a\sin
\theta) a \cos\theta.
\end{equation}
Consequently, from \eqref{p_tetha} and \eqref{p_theta2}, we deduce 
\begin{equation}
C^{\prime }(\theta) = F(a \sin \theta) \cot \theta + F(a \sin \theta) a \cos
\theta.  \label{Cprim}
\end{equation}
Finally, relations \eqref{p1} and \eqref{Cprim} yields, 
\begin{align}  \label{formula presiune}
p(r, \theta)&=b-g\int_a^r \rho(s, \theta) ds+\int_{a\sin \theta}^{r \sin
\theta}\left(\frac{F(s)}{s}+\pazocal{F}(s, \theta) ds\right) \\
& \quad +\int_{\tfrac{\pi}{2}}^ \theta \left( F(a\sin s)\cot s+\pazocal{F}(a
\sin s)a \cos s \right)ds,
\end{align}
where $a,b$ are arbitrary real numbers.

\subsection{The dynamic condition}

In the sequel, we analyze the dynamic boundary condition \eqref{dinamic
condition 1} on the free surface $r=R+h( \theta)$. Since we assumed an
azimuthal flow and $h$ is independent of the depth, relation \eqref{dinamic condition 1} takes
the form 
\begin{equation}  \label{dinamic condition 2}
p(R+h(\theta),\theta)=P(\theta)+\sigma \nabla \cdot \vec{n},
\end{equation}
where $P$ is the known pressure on the surface. Since the implicit equation
of the free surface is $H(r, \theta, z) := r - R - h(\theta)=0$ and the
gradient in spherical coordinates has the representation (see, e.g., \cite[%
Chapter 1]{griffin}),  
\begin{equation*}
\nabla=\mathbf{e}_r\partial_r+\mathbf{e}_\theta\frac{1}{r}\partial_\theta+%
\mathbf{e}_\phi\frac{1}{r \sin \theta}\partial _\phi,
\end{equation*}
we obtain the normal derivative to the surface $H$ as 
\begin{equation*}
\vec{N}=\mathbf{e}_r-\displaystyle\frac{h_\theta}{r}\mathbf{e}_\theta.
\end{equation*}
Therefore, the pointing unit normal vector is 
\begin{equation*}
\vec{n}=\displaystyle\frac{\vec{N}}{\lVert \vec{N} \rVert}=\displaystyle%
\frac{r}{\sqrt{r^2+{h_\theta}^2}}\cdot \vec{N}=\displaystyle\frac{r}{\sqrt{%
r^2+{h_\theta}^2}} \mathbf{e}_r-\displaystyle\frac{h_\theta}{\sqrt{r^2+{%
h_\theta}^2}} \mathbf{e}_\theta.
\end{equation*}
Whence, given the the divergence in spherical coordinates (see, e.g., \cite[%
Chapter 1]{griffin}), 
\begin{equation*}
\nabla \cdot w= \frac{1}{r^2}\partial_r(r^2 w_r)+\frac{1}{r \sin \theta}
\partial_\theta (\sin \theta \, w_\theta)+\frac{1}{r \sin \theta}%
\partial_\phi w_\phi,
\end{equation*}
one has, 
\begin{align}  \label{div n}
\nabla\cdot \vec{n}&=\displaystyle\frac{1}{r^2}\partial _r(r^2 n_r)+%
\displaystyle\frac{1} {r \sin \theta}\partial_\theta(n_\theta \,\sin \theta )
\\
& = \frac{2r^2+3h_\theta^2}{(r^2+h_\theta^2)^{\frac{3}{2}}}- \frac{%
rh_{\theta \theta}}{(r^2+h_\theta^2)^{\frac{3}{2}}}-\frac{\cot \theta\,
h_\theta }{r (r^2+h_\theta^2)^{\frac{1}{2}}}  \notag \\
& = \frac{2(R+h(\theta))^2+3h_\theta^2}{((R+h(\theta))^2+h_\theta^2)^{\frac{3%
}{2}}}-\frac{(R+h(\theta))h_{\theta \theta}}{((R+h(\theta))^2+h_\theta^2)^{%
\frac{3}{2}}}  \notag \\
& \quad -\frac{\cot \theta\, h_\theta }{(R+h(\theta))
((R+h(\theta))^2+h_\theta^2)^{\frac{1}{2}}}.  \notag
\end{align}
Finally, from \eqref{dinamic condition 2} and \eqref{div n}, we find the
pressure on the free surface in the form of a Bernoulli type problem 
\begin{align}  \label{bernoulli}
P(\theta)&=b-g\int_a^{R+h(\theta)} \rho(s, \theta) ds+\int_{a\sin
\theta}^{(R+h(\theta)) \sin \theta}\left(\frac{F(s)}{s}+\pazocal{F}(s,
\theta) ds\right) \\
& \quad +\int_{\tfrac{\pi}{2}}^ \theta \left( F(a\sin s)\cot s+\pazocal{F}(a
\sin s)a \cos s \right)ds  \notag \\
&\quad -\sigma \frac{2(R+h(\theta))^2+3h_\theta^2}{((R+h(\theta))^2+h_%
\theta^2)^\frac{3}{2}}+\sigma \frac{h_{\theta\theta}(R+h(\theta))}{\nonumber{%
((R+h(\theta))^2+h_\theta^2)^\frac{3}{2}}} \\
& \quad + \sigma \frac{\cot \theta\, h_\theta }{(R+h(\theta))
((R+h(\theta))^2+h_\theta^2)^{\frac{1}{2}}} .  \notag
\end{align}

\subsection{The existence of solutions for the free surface}

To have a meaningful comparison of the physical quantities, we start by 
non-dimensionalizing equation \eqref{bernoulli}. For this, we denote with $
P_{atm}$ the pressure corresponding to an undisturbed free surface, i.e., 
the pressure $P$ obtained by setting $h\equiv 0$ in \eqref{bernoulli}. 
Subsequently, we denote  
\begin{equation}
\mathcal{\wp }=\frac{P}{P_{atm}}\text{ and }\mathtt{h}=\frac{h}{R}.
\label{notatii1}
\end{equation}
Whence, relation \eqref{bernoulli} together with \eqref{notatii1}, leads us 
to the functional equation  
\begin{equation}
\mathbb{F}(\mathtt{h},\mathcal{\wp })=0,  \label{ec abstracta}
\end{equation}
where  
\begin{align}
\mathbb{F}(\mathtt{h},\mathcal{\wp })& =-\mathcal{\wp }\left( \theta \right)
-\frac{g}{P_{atm}}\int_{a}^{R(1+\mathtt{h}\left( \theta \right) )}\rho
\left( s,\theta \right) ds  \label{ec principala} \\
& \quad +\frac{1}{P_{atm}}\int_{a\sin \theta }^{R(1+\mathtt{h}\left( \theta
\right) )\sin \theta }\left( \frac{\nonumber F\left( s\right) }{s}+\mathcal{%
F }\left( s,\theta \right) \right) ds \\
& \quad -\frac{\sigma }{RP_{atm}}\left( \frac{(2\,(1+\mathtt{h})^{2}+3\, 
\mathtt{h}_{\theta }^{2}}{((1+\mathtt{h})^{2}+\mathtt{h}_{\theta }^{2})^{ 
\frac{3}{2}}}-\frac{\mathtt{h}_{\theta \theta }(1+\mathtt{h})}{((1+\mathtt{h}
)^{2}+\mathtt{h}_{\theta }^{2})^{\frac{3}{2}}}-\frac{\cot \theta \,\mathtt{h}
_{\theta }}{(1+\mathtt{h})\left( (1+\mathtt{h})^{2}+\mathtt{h}_{\theta
}^{2}\right) ^{\frac{1}{2}}}\right)  \notag \\
& \quad +\varphi \left( \theta \right) ,  \notag
\end{align}
and  
\begin{equation*}
\varphi \left( \theta \right) =\frac{b}{P_{atm}}+\frac{1}{P_{atm}}\int_{\pi
/2}^{\theta }\left( F\left( a\sin s\right) \cot s+\mathcal{F}\left( a\sin
s\right) a\cos s\right) ds.
\end{equation*}

Our goal is to establish a relationship between the pressure variations at
the free surface and the shape of the surface, which corresponds to finding
a nontrivial solution to the functional equation \eqref{ec abstracta}. To
achieve this, we assume the existence of a smooth enough "trivial" free surface, denoted
by $\bar{\gamma}$, which is derived from specific measurements. 
The non-dimensionalized pressure   $\mathcal{\wp}_{\gamma}$ required to maintain this shape is obtained by
substituting $\mathtt{h}$ with $\gamma = \frac{\bar{\gamma}}{R}$ in %
\eqref{ec principala}. 

\begin{remark}
If $\bar{\gamma} = 0$, then $\mathcal{\wp}_{\gamma}$ represents the pressure
required to maintain a flat, undisturbed free surface. 
\end{remark}

Let 
\begin{equation*}
C_{\gamma }=\left\{ u\in C^{2}\left[ \tfrac{\pi }{2}-\varepsilon ,\tfrac{\pi 
}{2}+\varepsilon \right] :u\left( \tfrac{\pi }{2}\right) =\gamma \left( 
\tfrac{\pi }{2}\right) ,\ u^{\prime }\left( \tfrac{\pi }{2}\right) =\gamma
^{\prime }\left( \tfrac{\pi }{2}\right) \right\} ,
\end{equation*}%
where $\varepsilon =0.016$ corresponds to a strip approximately 100 km wide
around the Equator \cite{cj}. Clearly, $\mathbb{F}$ defines a continuously
differentiable map 
\begin{equation*}
\mathbb{F}:C_{\gamma }\times C\left[ \tfrac{\pi }{2}-\varepsilon ,\tfrac{\pi 
}{2}+\varepsilon \right] \rightarrow C\left[ \tfrac{\pi }{2}-\varepsilon ,%
\tfrac{\pi }{2}+\varepsilon \right] .
\end{equation*}%
In the subsequent, we analyze the Fr\'{e}chet derivative of $\mathbb{F}$
with respect to the first argument. To this aim, we have 
\begin{align*}
\mathbb{F}(t\mathtt{h}+\gamma ,\mathcal{\wp }_{\gamma })-\mathbb{F}(\gamma ,%
\mathcal{\wp }_{\gamma })& =-\frac{g}{P_{atm}}\int_{R\left( 1+\gamma \right)
}^{R\left( 1+t\mathtt{h}+\gamma \right) }\rho (s,\theta )ds \\
& \quad +\frac{1}{P_{atm}}\int_{R\left( 1+\gamma \right) \sin \theta }^{R(1+t%
\mathtt{h}+\gamma )\sin \theta }\left( \frac{F\left( s\right) }{s}+%
\pazocal{F}\left( s,\theta \right) \right) ds \\
& \quad -\left( J\left( t\mathtt{h}+\gamma \right) -J\left( \gamma \right)
\right) ,
\end{align*}%
where 
\begin{equation*}
J(v):=\frac{\sigma }{RP_{atm}}\left( \frac{2(1+v)^{2}+3v_{\theta }^{2}}{%
((1+v)^{2}+v_{\theta }^{2})^{\frac{3}{2}}}-\frac{v_{\theta \theta }(1+v)}{%
((1+v)^{2}+v_{\theta }^{2})^{\frac{3}{2}}}-\frac{\cot \theta \,v_{\theta }}{%
(1+v)((1+v)^{2}+v_{\theta }^{2})^{\frac{1}{2}}}\right) .
\end{equation*}%
Using the mean value theorem, we easily derive 
\begin{align*}
& \lim_{t\rightarrow 0}\frac{1}{t}\left( -\frac{g}{P_{atm}}\int_{R\left(
1+\gamma \right) }^{R\left( 1+t\mathtt{h}+\gamma \right) }\rho (s,\theta
)ds\right) +\frac{1}{P_{atm}}\int_{R\left( 1+\gamma \right) \sin \theta
}^{R(1+t\mathtt{h}+\gamma )\sin \theta }\left( \frac{F\left( s\right) }{s}+%
\pazocal{F}\left( s,\theta \right) \right) ds \\
& =\psi (\theta )\mathtt{h},
\end{align*}%
where 
\begin{align*}
\psi (\theta )& =-\frac{Rg}{P_{atm}}\rho \left( R\left( 1+\gamma \right)
,\theta \right) +\frac{R\sin \theta }{P_{atm}}\left( \frac{F\left( R\left(
1+\gamma \right) \sin \theta \right) }{R\left( 1+\gamma \right) \sin \theta }%
+\pazocal{F}\left( R\left( 1+\gamma \right) \sin \theta ,\theta \right)
\right)  \\
& =\frac{\rho (R(1+\gamma ),\theta )}{P_{atm}}\left( -gR+(w(R(1+\gamma
),\theta )+\Omega R(1+\gamma )\sin \theta )^{2}\right) .
\end{align*}%
To compute the derivative of $J$ in $\gamma$, we denote
\begin{eqnarray*}
a\left( t\right)  &=&\frac{2(1+t\mathtt{h}+\gamma )^{2}+3\left( t\mathtt{h}%
_{\theta }+\gamma _{\theta }\right) ^{2}}{\left( (1+t\mathtt{h}+\gamma
)^{2}+\left( t\mathtt{h}_{\theta }+\gamma _{\theta }\right) ^{2}\right) ^{%
\frac{3}{2}}}, \\
b\left( t\right)  &=&\frac{\left( t\mathtt{h}_{\theta \theta }+\gamma
_{\theta \theta }\right) (1+t\mathtt{h}+\gamma )}{\left( (1+t\mathtt{h}%
+\gamma )^{2}+\left( t\mathtt{h}_{\theta }+\gamma _{\theta }\right)
^{2}\right) ^{\frac{3}{2}}}, \\
c\left( t\right)  &=&\frac{\cot \theta \left( \,t\mathtt{h}_{\theta }+\gamma
_{\theta }\right) }{(1+t\mathtt{h}+\gamma )\left( (1+t\mathtt{h}+\gamma
)^{2}+\left( t\mathtt{h}_{\theta }+\gamma _{\theta }\right) ^{2}\right) ^{%
\frac{1}{2}}}.
\end{eqnarray*}%
Since $a,b,c$ are smooth functions around $0,$ one has 
\begin{equation*}
\lim_{t\rightarrow 0}\frac{1}{t}\left( J\left( t\mathtt{h}+\gamma \right)
-J\left( \gamma \right) \right) =\frac{\sigma }{RP_{atm}}\left( a^{\prime
}\left( 0\right) -b^{\prime }\left( 0\right) -c^{\prime }\left( 0\right)
\right) .
\end{equation*}%
Simple computations yield 
\begin{eqnarray*}
a^{\prime }\left( 0\right)  &=&\frac{4\mathtt{h}(1+\gamma )+6\mathtt{h}%
_{\theta }\gamma _{\theta }}{\left( (1+\gamma )^{2}+\gamma _{\theta
}^{2}\right) ^{\frac{3}{2}}}-3\left( 2\left( 1+\gamma \right) ^{2}+3\gamma
_{\theta }^{2}\right) \frac{\left( \mathtt{h}\left( 1+\gamma \right) +%
\mathtt{h}_{\theta }\gamma _{\theta }\right) }{\left( (1+\gamma )^{2}+\gamma
_{\theta }^{2}\right) ^{\frac{5}{2}}}, \\
b^{\prime }\left( 0\right)  &=&\frac{\mathtt{h}_{\theta \theta }(1+\gamma
)+\gamma _{\theta \theta }\mathtt{h}}{\left( (1+\gamma )^{2}+\gamma _{\theta
}^{2}\right) ^{\frac{3}{2}}}-3\gamma _{\theta \theta }(1+\gamma )\frac{%
\left( \mathtt{h}\left( 1+\gamma \right) +\mathtt{h}_{\theta }\gamma
_{\theta }\right) }{\left( (1+\gamma )^{2}+\gamma _{\theta }^{2}\right) ^{%
\frac{5}{2}}}, \\
c^{\prime }\left( 0\right)  &=&\cot \theta \frac{\mathtt{h}_{\theta }}{%
\left( 1+\gamma \right) \left( \left( 1+\gamma \right) ^{2}+\gamma _{\theta
}^{2}\right) ^{\frac{1}{2}}}-\cot \theta \frac{\mathtt{h}\gamma _{\theta }}{%
\left( 1+\gamma \right) ^{2}\left( \left( 1+\gamma \right) ^{2}+\gamma
_{\theta }^{2}\right) ^{\frac{1}{2}}} \\
&&-\cot \theta \frac{2\gamma _{\theta }\left( \mathtt{h}\left( 1+\gamma
\right) +\mathtt{h}_{\theta }\gamma _{\theta }\right) }{\left( 1+\gamma
\right) \left( \left( 1+\gamma \right) ^{2}+\gamma _{\theta }^{2}\right) ^{%
\frac{3}{2}}}
\end{eqnarray*}%
Finally, 
\begin{equation}
D_{\mathtt{h}}\mathbb{F}\left( \gamma ,\mathcal{\wp }_{\gamma }\right)
\left( \mathtt{h}\right) =\tau _{1}\mathtt{h}_{\theta \theta }+\tau _{2}%
\mathtt{h}_{\theta }+\tau _{3}\mathtt{h,}  \label{derivata}
\end{equation}%
where 
\begin{eqnarray*}
\tau _{1} &=&\frac{\sigma }{RP_{atm}}\frac{1+\gamma }{\left( (1+\gamma
)^{2}+\gamma _{\theta }^{2}\right) ^{\frac{3}{2}}}, \\
\tau _{2} &=&\frac{\sigma }{RP_{atm}}\frac{-3\gamma _{\theta }^{3}+3\gamma
_{\theta }\gamma _{\theta \theta }(1+\gamma )}{\left( (1+\gamma )^{2}+\gamma
_{\theta }^{2}\right) ^{\frac{3}{2}}}+\frac{\sigma }{RP_{atm}}\frac{\cot
\theta \left( \left( 1+\gamma \right) ^{2}-\gamma _{\theta }^{2}\right) }{%
\left( 1+\gamma \right) \left( \left( 1+\gamma \right) ^{2}+\gamma _{\theta
}^{2}\right) ^{\frac{3}{2}}} ,\\
\tau _{3} &=&\frac{\sigma }{RP_{atm}}\frac{-2(1+\gamma )^{3}-8\gamma
_{\theta }^{2}\left( 1+\gamma \right) }{\left( (1+\gamma )^{2}+\gamma
_{\theta }^{2}\right) ^{\frac{5}{2}}}-\frac{\sigma }{RP_{atm}}\frac{\gamma
_{\theta \theta }\gamma _{\theta }^{2}-2\gamma _{\theta \theta }(1+\gamma
)^{2}}{\left( (1+\gamma )^{2}+\gamma _{\theta }^{2}\right) ^{\frac{5}{2}}},
\\
&&+\frac{\sigma }{RP_{atm}}\cot \theta \frac{-\gamma _{\theta }^{3}+\gamma
_{\theta }\left( 1+\gamma \right) ^{2}}{\left( 1+\gamma \right) ^{2}\left(
\left( 1+\gamma \right) ^{2}+\gamma _{\theta }^{2}\right) ^{\frac{3}{2}}}%
+\psi .
\end{eqnarray*}

\begin{theorem}
\label{th1}  The Fr\'{e}chet derivative of $\mathbb{F}$ with respect to $%
\mathtt{h}$, given in \eqref{derivata}, is a linear homeomorphism. 
\end{theorem}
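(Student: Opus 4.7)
The plan is to recognize $D_{\mathtt{h}}\mathbb{F}(\gamma,\mathcal{\wp}_{\gamma})$ as a linear second-order differential operator and to invoke Theorem \ref{th initial value pb} on the interval $[\tfrac{\pi}{2}-\varepsilon,\tfrac{\pi}{2}+\varepsilon]$ with base point $t_{0}=\pi/2$. Since $\mathbb{F}$ is defined on the affine set $C_{\gamma}$, the natural domain for the Fréchet derivative is the tangent Banach space
$$X_{0}=\{\mathtt{h}\in C^{2}[\tfrac{\pi}{2}-\varepsilon,\tfrac{\pi}{2}+\varepsilon]:\mathtt{h}(\tfrac{\pi}{2})=\mathtt{h}'(\tfrac{\pi}{2})=0\},$$
equipped with the $C^{2}$ norm, so I would restate the claim as: the operator $L\mathtt{h}:=\tau_{1}\mathtt{h}_{\theta\theta}+\tau_{2}\mathtt{h}_{\theta}+\tau_{3}\mathtt{h}$ is a Banach-space isomorphism from $X_{0}$ onto $C[\tfrac{\pi}{2}-\varepsilon,\tfrac{\pi}{2}+\varepsilon]$. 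Continuity of $L$ is immediate once one checks that the coefficients $\tau_{1},\tau_{2},\tau_{3}$ are continuous on this compact interval, which follows because $\gamma\in C^{2}$, $\cot\theta$ is smooth in a neighbourhood of the equator, and $\psi$ is a continuous combination of $\rho$, $F$, and $w$ evaluated at $r=R(1+\gamma)$; then $\|L\mathtt{h}\|_{\infty}\le(\|\tau_{1}\|_{\infty}+\|\tau_{2}\|_{\infty}+\|\tau_{3}\|_{\infty})\|\mathtt{h}\|_{C^{2}}$.

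The crucial observation is that the leading coefficient
$$\tau_{1}=\frac{\sigma}{RP_{atm}}\,\frac{1+\gamma}{((1+\gamma)^{2}+\gamma_{\theta}^{2})^{3/2}}$$
is strictly positive, because the physical free surface $R(1+\gamma)$ is positive, so $1+\gamma>0$. By compactness of the interval, $\tau_{1}$ is bounded away from $0$, and the equation $L\mathtt{h}=g$ can be written in the standard form $\mathtt{h}_{\theta\theta}+p(\theta)\mathtt{h}_{\theta}+q(\theta)\mathtt{h}=g/\tau_{1}$ with $p=\tau_{2}/\tau_{1}$, $q=\tau_{3}/\tau_{1}$ continuous. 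For any $g\in C[\tfrac{\pi}{2}-\varepsilon,\tfrac{\pi}{2}+\varepsilon]$, Theorem \ref{th initial value pb} applied with $t_{0}=\pi/2$ furnishes a unique $\mathtt{h}\in C^{2}$ solving this ODE with vanishing initial data; this $\mathtt{h}$ lies in $X_{0}$, proving that $L$ is a bijection.

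To conclude that $L$ is a homeomorphism I would use the explicit integral representation from Theorem \ref{th initial value pb}: it produces a bound $\|\mathtt{h}\|_{\infty}\le C\|g\|_{\infty}$, and differentiating the representation yields the same estimate for $\|\mathtt{h}_{\theta}\|_{\infty}$; the control of $\mathtt{h}_{\theta\theta}$ then follows from the ODE itself, $\mathtt{h}_{\theta\theta}=(g-\tau_{2}\mathtt{h}_{\theta}-\tau_{3}\mathtt{h})/\tau_{1}$, together with the lower bound on $\tau_{1}$. Alternatively, once $L$ is known to be a continuous bijection between Banach spaces, the Banach open mapping theorem gives continuity of $L^{-1}$ for free. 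The only genuinely substantive point is the nondegeneracy of the leading coefficient $\tau_{1}$; everything else is a routine reduction to classical ODE theory, and the restriction to the narrow equatorial strip ensures that $\cot\theta$ remains bounded so no degeneration of the other coefficients occurs.
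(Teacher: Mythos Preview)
Your argument is correct and follows essentially the same route as the paper: recognize $D_{\mathtt{h}}\mathbb{F}(\gamma,\mathcal{\wp}_{\gamma})$ as a linear second-order ODE operator with nonvanishing leading coefficient $\tau_{1}$, invoke Theorem~\ref{th initial value pb} at $t_{0}=\pi/2$ to get bijectivity, and appeal to the open mapping theorem for continuity of the inverse. Your formulation is in fact tidier than the paper's, since you correctly pass to the tangent space $X_{0}$ with homogeneous initial data (the natural domain of the Fr\'echet derivative on the affine set $C_{\gamma}$) and you make explicit why $\tau_{1}>0$, whereas the paper simply asserts $\tau_{1}\neq 0$ and writes the IVP with the inhomogeneous data $u(\pi/2)=\gamma(\pi/2)$, $u'(\pi/2)=\gamma'(\pi/2)$.
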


\begin{proof}
        Clearly $D_{\mathtt{h}}\mathbb{F}\left( \gamma ,\mathcal{\wp }_{\gamma }\right)
        $ is linear and continuous. Thus, if one can prove that it is a bijection, the Inverse Theorem guarantees that its inverse is also continuous (see, e.g., \cite{thinverse1}, \cite{thinverse2}).
        The bijectivity of $D_{\mathtt{h}}\mathbb{F}\left( \gamma ,\mathcal{\wp }_{\gamma }\right)
        $ is equivalent with proving that for any $g\in C\left[ \tfrac{\pi }{2}-\varepsilon ,\tfrac{\pi
        }{2}+\varepsilon \right]$, the initial value problem
        \begin{equation}
            \begin{cases}
                \tau _{1} u_{\theta \theta} + \tau _{2} u_{\theta} + \tau _{3} u = 0, \\
                u\left( \frac{\pi }{2}\right) = \gamma \left( \frac{\pi }{2}\right), \\
                u^{\prime }\left( \frac{\pi }{2}\right) = \gamma ^{\prime }\left( \frac{\pi }{2}\right)
            \end{cases}
        \end{equation}
        has a unique solution.
        Since $\tau _{1}\neq 0$ and $\tau_2, \tau_2$ are  continuous functions on $\left [ \frac{\pi}{2}-\varepsilon, \frac{\pi}{2}+\varepsilon\right]$,
        we employ \Cref{th initial value pb} to obtain the conclusion.

    \end{proof}
Our main result of this paper follows from \Cref{teorema functiei implicie}
and \Cref{th1}, that is,

\begin{theorem}
For any sufficiently small perturbation of $\mathcal{\wp }$ from $\mathcal{%
\wp }_\gamma$, there exists a unique $\mathtt{h} \in C_\gamma$ such that %
\eqref{ec abstracta} holds. 
\end{theorem}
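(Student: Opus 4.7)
The plan is to deduce this result directly from the Implicit Function Theorem (\Cref{teorema functiei implicie}), using \Cref{th1} as the crucial ingredient that supplies the nondegeneracy hypothesis. In the notation of \Cref{teorema functiei implicie}, I would take $X = C[\tfrac{\pi}{2}-\varepsilon,\tfrac{\pi}{2}+\varepsilon]$ (the ambient space of the pressure profile $\mathcal{\wp}$), $Y$ equal to the tangent Banach space of $C_{\gamma}$, namely
\[
Y = \{u \in C^{2}[\tfrac{\pi}{2}-\varepsilon,\tfrac{\pi}{2}+\varepsilon] : u(\tfrac{\pi}{2})=0,\ u'(\tfrac{\pi}{2})=0\}
\]
endowed with the $C^{2}$-norm, and $Z = C[\tfrac{\pi}{2}-\varepsilon,\tfrac{\pi}{2}+\varepsilon]$ with the sup norm. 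The distinguished point is $(x_{0},y_{0}) = (\mathcal{\wp}_{\gamma}, \gamma)$, after writing a generic element of $C_{\gamma}$ as $\gamma + u$ with $u \in Y$.

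First, I would verify that $\mathbb{F}(\gamma,\mathcal{\wp}_{\gamma})=0$; this is immediate from the definition of $\mathcal{\wp}_{\gamma}$, which was constructed precisely by substituting $\mathtt{h}=\gamma$ into \eqref{ec principala}. Next, I would note that the map $\mathbb{F}$ is continuously Fr\'echet differentiable in both arguments: dependence on $\mathcal{\wp}$ is affine (via the term $-\mathcal{\wp}(\theta)$), while the dependence on $\mathtt{h}$ involves only smooth algebraic compositions together with integrals of continuous functions with respect to endpoints that are smooth in $\mathtt{h}$, so the chain rule and dominated convergence yield $C^{1}$ smoothness. The explicit form of $D_{\mathtt{h}}\mathbb{F}(\gamma,\mathcal{\wp}_{\gamma})$ was derived in \eqref{derivata}.

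The key hypothesis to verify is that $D_{\mathtt{h}}\mathbb{F}(\gamma,\mathcal{\wp}_{\gamma})\colon Y \to Z$ is a linear homeomorphism, and this is exactly the content of \Cref{th1}: for every $g\in Z$ the equation $\tau_{1}u_{\theta\theta}+\tau_{2}u_{\theta}+\tau_{3}u = g$ with zero initial data at $\theta = \pi/2$ admits a unique $C^{2}$ solution by \Cref{th initial value pb} (using that $\tau_{1}$ is nonvanishing and $\tau_{2},\tau_{3}$ are continuous on the interval), so the operator is bijective, hence by the Banach Inverse Mapping theorem a topological isomorphism.

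With these three hypotheses in hand, \Cref{teorema functiei implicie} produces an open neighbourhood $U_{1}$ of $\mathcal{\wp}_{\gamma}$ in $C[\tfrac{\pi}{2}-\varepsilon,\tfrac{\pi}{2}+\varepsilon]$ and a unique continuous map $\mathcal{\wp}\mapsto \mathtt{h}(\mathcal{\wp}) \in C_{\gamma}$, with $\mathtt{h}(\mathcal{\wp}_{\gamma}) = \gamma$, such that $\mathbb{F}(\mathtt{h}(\mathcal{\wp}),\mathcal{\wp}) = 0$ on $U_{1}$; this is precisely the statement to be proved. The only real subtlety is the affine-versus-linear bookkeeping for $C_{\gamma}$, which is handled transparently by translating to $Y$; beyond that, no step is expected to present a genuine obstacle, since \Cref{th1} has already absorbed the analytic difficulty.
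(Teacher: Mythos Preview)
Your proposal is correct and follows exactly the approach the paper takes: the paper's own proof is a single sentence stating that the result follows from \Cref{teorema functiei implicie} together with \Cref{th1}, and you have simply fleshed out that sentence by naming the Banach spaces, checking $\mathbb{F}(\gamma,\mathcal{\wp}_\gamma)=0$, and pointing to \Cref{th1} for the invertibility of $D_{\mathtt{h}}\mathbb{F}(\gamma,\mathcal{\wp}_\gamma)$. Your remark on handling the affine space $C_\gamma$ by translation to the linear subspace $Y$ is a useful clarification that the paper leaves implicit.
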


\section*{Acknowledgements}


The author acknowledges the support provided by the project "Nonlinear
Studies of Stratified Oceanic and Atmospheric Flows", funded by the European
Union through the Next Generation EU initiative and the Romanian Government
under the National Recovery and Resilience Plan for Romania. The project is
contracted under number 760040/23.05.2023, cod PNRR-C9-I8-CF 185/22.11.2022,
through the Romanian Ministry of Research, Innovation, and Digitalization,
within Component 9, Investment I8. Also, the author would like to thank  Professor C.I. Martin for valuable discussions and suggestions.

\end{document}